\renewcommand{\paragraph}{\roman{paragraph}}
\newcommand{\F}{\mathbb{F}}
\newtheorem{thm}{\bfseries  Theorem}[section]
\newtheorem{lem}[thm]{\bfseries   Lemma}
\newtheorem{Def}[thm]{\bfseries  Definition}
\newtheorem{ex}[thm]{\bfseries   Example}
\newtheorem{rem}[thm]{\bfseries   Remark}
\begin{document}
%\begin{CJK*}{GBK}{song}\CJKtilde
\title{\bf On self-dual negacirculant codes of index two and four
\thanks{This research is supported by National Natural Science Foundation of China (61672036), Technology Foundation for Selected Overseas Chinese Scholar, Ministry of Personnel of China (05015133) and the Open Research Fund of National Mobile Communications Research Laboratory, Southeast University (2015D11) and Key projects of support program for outstanding young talents in Colleges and Universities (gxyqZD2016008).}
}
\author{
\small{Minjia Shi}\\
\small { School of Mathematical Sciences, Anhui University, Hefei, 230601, China}\\
\small{National Mobile Communications Research Laboratory,}\\
 \small{Southeast University, 210096, Nanjing, China }\\
\small{Liqin Qian}\\\small{School of Mathematical Sciences of Anhui University, Hefei, 230601, China }
\and \small{Patrick Sol\'e}\\ \small{CNRS/LAGA, University of Paris 8, 2 rue de la libert\'e, 93 526 Saint-Denis, France}
}

\date{}
\maketitle
%\end{CJK*}
%\begin{CJK}{GBK}{song}
\begin{abstract}
We study the asymptotic performance of quasi-twisted codes viewed as modules in the ring $R=\mathbb{F}_q[x]/\langle x^n+1\rangle, $
when they are self-dual and of length $2n$ or $4n.$ In particular, in order for the decomposition to be amenable to analysis, we study
 factorizations of $x^n+1$ over $\mathbb{F}_q, $ with $n$ twice an odd prime, containing only three irreducible factors, all self-reciprocal.
 We give arithmetic conditions bearing on $n$ and $q$ for this to happen. Given a fixed $q,$ we show these conditions are met for infinitely many $n$'s,
provided a refinement of  Artin primitive root conjecture holds.
This number theory conjecture is known to hold under GRH (Generalized Riemann Hypothesis). We derive a modified Varshamov-Gilbert bound on the relative distance of the codes
considered, building on exact enumeration results for given $n$ and $q.$
\end{abstract}
{\bf Keywords:} Negacirculant codes; Self-dual codes; Self-reciprocal polynomials, Artin primitive root conjecture\\
{\bf MSC(2010):} 94 B15, 94 B25, 05 E30

\section{Introduction}

The class of negacirculant codes plays a very significant role in the theory of error-correcting codes as they are a direct generalization of the important family of quasi-cyclic codes,
which is known to be asymptotically good \cite{K1}. They have been used successfully to construct self-dual codes over fields of odd characteristic \cite{1}.
They are a special class of quasi-twisted (QT) codes \cite{J}. A code of length $N$ is {\it quasi-twisted} of {\it index} $l$ for some nonzero scalar $\lambda\in \mathbb{F}_q$,
and of  {\it co-index} $n=N/l$ if it is invariant under the power $T_\lambda^l$ of the constashift $T_\lambda$ defined as
$$T_\lambda(x_0,x_1,\ldots,x_{N-1})\mapsto (\lambda x_{N-1},x_0,\ldots,x_{N-2}).$$ Thus negacirculant codes are QT codes with $\lambda=-1.$

In the present paper, we study the double (resp. four)-negacirculant codes, i.e., index $2$ (resp. $4$) quasi-twisted codes with $\lambda=-1$.
By \cite{J}, QT codes can be decomposed as a direct sum of local rings by the Chinese Remainder Theorem applied to the semilocal ring $R(n,\mathbb{F}_q)=\frac{\mathbb{F}_q[x]}{(x^n-\lambda)}$ . This technique is well suited for studying self-dual QT codes. Although \cite{A1} has already studied self-dual double-negacirculant codes, we consider a different factorization of $x^n+1$ over $\mathbb{F}_q$ to study self-dual double negacirculant codes, and we also study the case of index $4.$ The codes we construct have thus different co-indices, and therefore, different lengths than the codes in \cite{A1}.

In fact, the number of rings occurring in the decomposition of $R(n,\mathbb{F}_q)$ equals the number of irreducible factors of $x^n+1$.
To simplify the analysis, we consider the case when $n=2p$ $(p,q)=1$, $p$ an odd prime $\equiv 3~({\rm mod}~4)$, the alphabet size $q$ is a prime power, and $x^n+1$  can be factored
into a product of three irreducible polynomials,
instead of two for \cite{A1,B1,H}. We demand, to simplify the self-duality conditions, that these factors be self-reciprocal. We give arithmetic conditions bearing on $p,q$ for
these two constraints (three factors and self-reciprocality) on the factorization of $x^n+1$ to hold. These constraints, in turn, are satisfied for infinitely many $p$'s,
provided a refinement of the Artin primitive conjecture holds \cite{M}. When this special factorization is thus enforced, we derive exact enumeration formulae,
and obtain the asymptotic lower bound on the minimum distance of these double (resp. four)-negacirculant codes. This is an analogue of the Varshamov-Gilbert bound \cite{C1,W,K1}.
While this article follows the same line of study as \cite{A1}, it differs in two significant ways: by the value of $n$ (twice a prime instead of a power of $2$), which requires
deep number theoretic facts to establish the factorization of $x^n+1,$ and by the consideration
of four-negacirculant codes, whose structure is more complex than that of double-circulant codes. Note that using the same factorization as in \cite{A1,A2} would have led to using
the euclidean inner product corresponding to reciprocal pairs of polynomials \cite{LS}, and to more difficult equations over finite fields, because
of the primed terms in the generating matrix. Demanding that the factors of $x^n+1$ be self-reciprocal avoids this difficulty, but
requires the theory
of good integers \cite{J1,M0}, and Artin conjecture for arithmetic progressions \cite{M}.

The paper is organized as follows. The next section collects some background material. In Section 3, we present a special kind of factorization of $x^n+1$ over $\mathbb{F}_q$.
Section 4 derives the enumeration formulae of the self-dual double (resp. four)-negacirculant codes of co-index $n$. In Section 5, we study the asymptotic performance of the double
(resp. four)-negacirculant codes. Section 6 is the conclusion of this paper.

\section{Preliminaries}
\subsection{Generator matrix}
If the rows of a matrix over $\mathbb{F}_q,$ with $q$ a prime power, are obtained by successive negashifts from its first row, this matrix is said to be {\it negacirculant}.
Thus, such a matrix is uniquely determined by the polynomial $a(x) \in \mathbb{F}_q[x]$ whose $x$-expansion is the first row of the said matrix.
For example, if $a(x)=a_0+a_1x+\cdots+a_{n-1}x^{n-1}\in \mathbb{F}_q[x]$, then the negacirculant matrix $A$ can be determined as follows:
%\begin{array}{ccccc}
%    a_0 & a_1 & a_2& \cdots & a_{n-1} \\
%    -a_{n-1} & a_0 & a_1$& \cdots$ & a_{n-2} \\
%    \vdots & \vdots & \vdots & \ddots& \vdots \\
%    -a_1& -a_2 & -a_3& \cdots & a_0 \\
%  \end{array}
 $$A=\left(\begin{array}{cccccc}
    a_0 & a_1 & a_2& \cdots & a_{n-2} & a_{n-1} \\
    -a_{n-1} & a_0 & a_1& \cdots & a_{n-3} & a_{n-2} \\
    \vdots & \vdots & \vdots & \ddots & \vdots & \vdots \\
    -a_1& -a_2 & -a_3& \cdots & -a_{n-1} & a_0 \\
  \end{array}
\right).$$

In fact, any $\lambda$-constacyclic code $C$ of length $n$ over $\mathbb{F}_q$ affords a natural structure of module  over the auxiliary ring $R(n,\mathbb{F}_q)=\frac{\mathbb{F}_q[x]}{(x^n-\lambda)}.$ Then, for $\lambda=-1$, double (resp. four)-negacirculant codes are quasi-twisted codes of index $2$ (resp. $4$). And double (resp. four) negacirculant codes are 1 (resp. 2) generator quasi-twisted (QT) codes as a module.

\textbf{Double-negacirculant codes:}

A {\it double-negacirculant} code $C_1$ of length $2n$ is a $[2n,n]$ code over the field $\mathbb{F}_q$ with a generator matrix of the form $$G_1=(I_n~H),$$
 where $I_n$ is the identity matrix of order $n,$ the matrix $H$ is an $n\times n$ negacirculant matrix (i.e., each row is the negashift of the previous one) over $\mathbb{F}_q.$ Let $C_{h}$ be the double-negacirculant codes with the first row of $H$ being the $x$-expansion of $h$ in the ring $R(n,\mathbb{F}_q).$

\textbf{Four-negacirculant codes:}

A {\it four-negacirculant} code $C_2$ of length $4n$ is a $[4n,2n]$ code over the field $\mathbb{F}_q$ with a generator matrix of the form
 $$G_2=\left(
  \begin{array}{cccc}
    I_n & 0 & A & B \\
    0 & I_n & -B^t & A^t \\
  \end{array}
\right),$$
  where $I_n$ is the identity matrix of order $n,$ the matrices $A,B$ are the $n\times n$ negacirculant matrices over $\mathbb{F}_q,$ and the exponent $``t"$ denotes transposition.

% and $C$ satisfies $C=C^\perp$, where the dual code $C^\perp$ of $C$ is defined as $C^\perp=\{x\in \mathbb{F}_q^{4n}\mid x\cdot y=0 ~{\rm for ~all} ~y\in C\}$ under the standard inner product $x\cdot y$.
We give a general conditions for $G_2$ to generate a self-dual code that generalizes \cite[Proposition 1]{1} from $\mathbb{F}_3$ to $\mathbb{F}_q,$ and from negacirculant matrices to more general
matrices.

{\prop If $G_2$ is as above with the four matrices $A,A^t,B,B^t$ pairwise commuting, and satisfying $AA^t+BB^t+I_n=0$, then $C_2$ is a self-dual code of length $4n,$ and dimension $2n$ over $\mathbb{F}_q$.
In particular, if both $A,$ and $B$ are negacirculant and satisfy $AA^t+BB^t+I_n=0$, then $C_2$ is self-dual.}

\begin{proof}
 Given the hypotheses made on $A,A^t,B,B^t$ it can be checked, using matrix block multiplication,  that $G_2G_2^t=0.$ This shows the self-orthogonality of $C_2.$ Since the matrix
 $G_2$ has rank $2n,$ self-duality  follows.
 In particular, the commutation hypotheses on $A,A^t,B,B^t$ are satisfied if both $A,$ and $B$ are negacirculant. Observe that a negacirculant matrix $A$ with first row polynomial $a(x)$
 is none other that $a(N),$ where $N$ denotes the special negacirculant matrix with attached polynomial $x.$ Thus all pairs of negacirculant matrices of a given order,
 being polynomials in the same matrix $N,$
 commute.
\end{proof}

 In \cite{1}, many self-dual codes with good parameters are constructed in that way.
Let $C_{a,b}$ be the four-negacirculant codes with the first rows of $A,B$ being the $x$-expansion of $a,b$ in the ring $R(n,\mathbb{F}_q).$ Such a four-negacirculant code has a generator matrix $G_2$ with $A,B$ $n\times n$ negacirculant matrices, as described above. Algebraically, we can view such a code $C_{a,b}$ as an $R$-module in $R^4$, generated by $\langle(1,0, a(x),b(x)),(0,1,-b^\prime(x),a^\prime(x))\rangle$, where $a^\prime(x)=a(-x^{n-1})~{\rm mod}~x^n+1$, $b^\prime(x)=b(-x^{n-1})~{\rm mod}~x^n+1$, $R=\mathbb{F}_q[x]/\langle x^n+1\rangle$.
\subsection{Asymptotics}

Now, we recall the rate $r$ and relative distance $\delta$. If $C(n)$ is a family of codes of parameters $[n, k_n, d_n]$, the rate $r$ and relative distance $\delta$ are defined as $$r=\limsup\limits_{n \rightarrow \infty}\frac{k_n}{n},$$ and
\begin{equation*}\delta=\liminf\limits_{n \rightarrow \infty}\frac{d_n}{n}.\end{equation*}
The $q$-ary \emph{entropy function} is defined by \cite{W}:
\begin{equation}H_q(t)=
 \small
\begin{cases}
\emph{ }t{\rm{log}}_q(q-1)-t{\rm{log}}_q(t)-(1-t){\rm{log}}_q(1-t),~0<t<\frac{q-1}{q},\notag \\
 \emph{ }0,~t=0. \\
\end{cases}
\end{equation}
This quantity is instrumental in the estimation of the volume of high-dimensional Hamming balls when the base field is $\mathbb{F}_q$.
The result we are using is that the volume of the Hamming ball of radius $tn$ is, up to subexponential terms, $q^{nH_q(t)}$, when
$0<t<1$, and $n$ goes to infinity \cite[Lemma 2.10.3]{W}.
\section{\textbf{A special factorization of $x^n+1$ over $\mathbb{F}_q$}}
In this section, we first recall the {\it good integers}, a concept introduced by Moree \cite{M0}, with many applications in Coding Theory \cite{J1}. For fixed coprime nonzero integers $l_1$ and $l_2$, a positive integer $s$ is said to be \emph{good} (with respect to $l_1$ and $l_2$) if it is a divisor of $l_1^k+l_2^k$ for some integer $k\geq 1$. Otherwise, $s$ is said to be \emph{bad}. We denote the set of good integers with respect to $l_1$ and $l_2$ by $G_{(l_1,l_2)}$, i.e., $G_{(l_1,l_2)}=\{s\mid s|l_1^k+l_2^k, k\geq 1\}$. We will require the following lemma.
\begin{lem}\label{lem2} (\cite{J1}, Proposition 2.7) Let $l_1,l_2$ and $s>1$ be pairwise coprime odd positive integers and let $\beta\geq 2$ be an integer. Then $2^\beta s\in G_{(l_1,l_2)}$ if and only if ord$_{2^\beta}(\frac{l_1}{l_2})=2$ and $2\parallel$ord$_s(\frac{l_1}{l_2})$. In this case, $2\parallel$ord$_{2^\beta s}(\frac{l_1}{l_2}).$
\end{lem}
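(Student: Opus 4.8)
The plan is to translate the divisibility into a statement about the multiplicative order of the single unit $\alpha := l_1 l_2^{-1}$, where the inverse is taken modulo $2^\beta s$ (legitimate since $l_2$ is odd and coprime to $s$). First I would observe that $2^\beta s \mid l_1^k + l_2^k$ is equivalent to $\alpha^k \equiv -1 \pmod{2^\beta s}$, so that $2^\beta s \in G_{(l_1,l_2)}$ holds precisely when $-1$ lies in the cyclic group $\langle \alpha \rangle$. By the Chinese Remainder Theorem this factors as the pair of requirements $\alpha^k \equiv -1 \pmod{2^\beta}$ and $\alpha^k \equiv -1 \pmod{s}$ holding for one common exponent $k$; the whole proof is then the synchronization of these two congruences.

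Next I would treat the $2$-part. Using the structure $(\mathbb{Z}/2^\beta\mathbb{Z})^\times \cong \langle -1 \rangle \times \langle 5 \rangle$ for $\beta \geq 3$ (and $(\mathbb{Z}/4\mathbb{Z})^\times = \{\pm 1\}$ for $\beta = 2$), I would show that $-1 \in \langle \alpha \rangle \pmod{2^\beta}$ forces $\alpha \equiv -1 \pmod{2^\beta}$: writing $\alpha = (-1)^\epsilon 5^j$, the equation $\alpha^k \equiv -1$ demands $\epsilon k$ odd and $jk \equiv 0 \pmod{2^{\beta-2}}$, whence $\epsilon = 1$, $k$ odd, and $j = 0$. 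Consequently the $2$-part congruence is solvable iff $\alpha \equiv -1 \pmod{2^\beta}$ (so in particular $\operatorname{ord}_{2^\beta}(\alpha) = 2$), in which case its solution set is exactly the odd integers. This pins down $v_2(\operatorname{ord}_{2^\beta}(\alpha)) = 1$ and the parity constraint ``$k$ odd'' that the $s$-part must match.

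Finally I would analyse the $s$-part together with the synchronization. Decomposing $s$ into its prime-power divisors, each factor group is cyclic, so $-1 \in \langle \alpha \rangle \pmod{s}$ is governed by the $2$-adic valuations of the component orders; setting $d := \operatorname{ord}_s(\alpha)$, solvability of $\alpha^k \equiv -1 \pmod{s}$ requires $d$ even, and then the solution set is $k \equiv d/2 \pmod{d}$, i.e.\ exactly the integers with $v_2(k) = v_2(d) - 1$. Intersecting this with the odd exponents forced by the $2$-part gives $v_2(d) - 1 = 0$, that is $2 \parallel \operatorname{ord}_s(\alpha)$; conversely, when $2 \parallel d$ the exponent $k = d/2$ is odd and solves both congruences, establishing $2^\beta s \in G_{(l_1,l_2)}$. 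The concluding assertion is then immediate, since $\operatorname{ord}_{2^\beta s}(\alpha) = \operatorname{lcm}(2, d)$ has $2$-adic valuation $\max(1, v_2(d)) = 1$, i.e.\ $2 \parallel \operatorname{ord}_{2^\beta s}(\alpha)$. The main obstacle is the $2$-part: because $(\mathbb{Z}/2^\beta\mathbb{Z})^\times$ is non-cyclic for $\beta \geq 3$ it carries three elements of order $2$, and the real content is that only $\alpha \equiv -1$ can place $-1$ inside $\langle \alpha \rangle$ — this, rather than mere evenness of the orders, is what makes the valuation-matching between the two moduli the crux of the argument.
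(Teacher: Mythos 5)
The paper offers no proof of this lemma to compare against --- it is quoted verbatim from Jitman (reference [J1], Proposition 2.7) --- so your argument must stand on its own, and it does not: your reduction to $\alpha^k\equiv-1\pmod{2^\beta s}$, the CRT splitting, the forward implication, and the closing claim about ${\rm ord}_{2^\beta s}(\alpha)$ are all correct, but the converse direction has a genuine gap at both moduli. For the $2$-part, your own (correct) analysis shows that $\alpha^k\equiv -1 \pmod{2^\beta}$ is solvable if and only if $\alpha\equiv -1\pmod{2^\beta}$, which for $\beta\geq 3$ is strictly stronger than the stated hypothesis ${\rm ord}_{2^\beta}(\alpha)=2$, since $(\mathbb{Z}/2^\beta\mathbb{Z})^\times$ then has three elements of order $2$. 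In the synthesis step (``when $2\parallel d$ the exponent $k=d/2$ is odd and solves both congruences'') you silently replace the hypothesis ${\rm ord}_{2^\beta}(\alpha)=2$ by the stronger condition $\alpha\equiv-1\pmod{2^\beta}$, which you never derived and cannot derive: for $l_1=3$, $l_2=1$, $\beta=3$, $s=7$ one has ${\rm ord}_8(3)=2$ and $2\parallel{\rm ord}_7(3)=6$, yet $3^k+1\equiv 2$ or $4\pmod 8$ for every $k$, so $56\notin G_{(3,1)}$.

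For the $s$-part, the step ``solvability of $\alpha^k\equiv-1\pmod s$ requires $d$ even, and then the solution set is $k\equiv d/2\pmod d$'' is false when $s$ is composite: evenness of the lcm $d={\rm ord}_s(\alpha)$ does not imply solvability. The correct criterion (which your phrase ``governed by the $2$-adic valuations of the component orders'' gestures at, before you collapse it onto the lcm alone) is that ${\rm ord}_{p^e}(\alpha)$ have one and the same positive $2$-adic valuation for every prime power $p^e$ exactly dividing $s$. Concretely, ${\rm ord}_{15}(4)=2$, but $4^k\equiv-1\pmod{15}$ has no solution because $4\equiv 1\pmod 3$; feeding this into the lemma, the data $l_1=19$, $l_2=1$, $s=15$, $\beta=2$ satisfy ${\rm ord}_4(19)=2$ and $2\parallel{\rm ord}_{15}(19)=2$, yet $19^k+1\equiv 2\pmod 3$ for all $k$, so $60\notin G_{(19,1)}$. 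Thus the converse is not merely unproved but false as stated for general $\beta\geq 2$ and composite $s$; your argument (and the lemma) become correct precisely in the regime where this paper applies it, namely $\beta=2$, where ${\rm ord}_4(\alpha)=2$ does force $\alpha\equiv-1\pmod 4$, and $s=p$ an odd prime, where $(\mathbb{Z}/s\mathbb{Z})^\times$ is cyclic. (A smaller slip: even for prime $s$, the solution set $\{k\equiv d/2 \pmod d\}$ is in general a proper subset of $\{k: v_2(k)=v_2(d)-1\}$, though this does not affect the synchronization argument.)
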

We are now ready to describe the main factorization result of this paper.
\begin{thm}\label{tem} Let $n=2p,$ with $ p\equiv 3~({mod}~4),$ an odd prime. Let  $q\equiv 3~({mod}~4),$ denote a prime power, and $(p,q)=1.$ Assume ord$_{4p}(q)=p-1$. Then
  $x^{n}+1$ can be factored into the product of $3$ irreducible polynomials over $\mathbb{F}_q$, namely
  \begin{eqnarray*}
  % \nonumber to remove numbering (before each equation)
    x^{n}+1 &=& (x^2+1)g_1(x)g_2(x),
  \end{eqnarray*}
 where $deg(g_1(x))=deg(g_2(x))=p-1.$
 \begin{enumerate}
 \item If, on the one hand, the order of $q$ mod $p$ is $\equiv 2 \pmod{4},$ then
 both $g_1(x)$ and $g_2(x)$ are self-reciprocal.
 In particular, if $q$ is primitive modulo $p,$ then we have the said factorization in three factors with both $g_1(x)$ and $g_2(x)$ self-reciprocal.
\item If, on the other hand, the order of $q$ mod $p$ is not $\equiv 2 \pmod{4},$ then
  $g_1(x)$ and $g_2(x)$ are reciprocal of each other.
\end{enumerate}
\end{thm}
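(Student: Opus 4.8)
The plan is to pass to $\overline{\mathbb{F}}_q$ and organize the factorization by the multiplicative orders of the roots. Since $\gcd(4p,q)=1$, the polynomial $x^{2p}+1$ is separable, and its roots are exactly the elements whose order divides $4p$ but not $2p$; as the divisors of $4p$ are $1,2,4,p,2p,4p$, these are precisely the elements of order $4$ or $4p$. The two elements of order $4$ are the roots of $x^2+1$, and the $\phi(4p)=2p-2$ elements of order $4p$ are the roots of the cyclotomic polynomial $\Phi_{4p}(x)$, giving the skeleton $x^{2p}+1=(x^2+1)\Phi_{4p}(x)$ with the correct degree count $2+(2p-2)=2p$. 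Because $q\equiv3\pmod4$ makes $-1$ a nonsquare, $x^2+1$ is irreducible; and over $\mathbb{F}_q$ the polynomial $\Phi_{4p}(x)$ splits into $\phi(4p)/\mathrm{ord}_{4p}(q)$ irreducible factors of degree $\mathrm{ord}_{4p}(q)$. Feeding in $\mathrm{ord}_{4p}(q)=p-1$ yields exactly $(2p-2)/(p-1)=2$ factors of common degree $p-1$, which I call $g_1$ and $g_2$; this already gives $x^{2p}+1=(x^2+1)g_1(x)g_2(x)$ with $\deg g_1=\deg g_2=p-1$.

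The heart of the matter is the self-reciprocity dichotomy. Set $H=\langle q\rangle\le(\mathbb{Z}/4p\mathbb{Z})^\ast$, a subgroup of order $\mathrm{ord}_{4p}(q)=p-1$ and hence of index $2$. The Frobenius orbits on the primitive $4p$-th roots of unity are the cosets of $H$, so the two cosets index the root sets of $g_1$ and $g_2$. The root set of each $g_i$, which is a coset $aH$, is closed under inversion---equivalently, $g_i$ is self-reciprocal---precisely when $-H=H$, that is when $-1\in H$, i.e. when $q^k\equiv-1\pmod{4p}$ for some $k$. If instead $-1\notin H$, then $-1$ lies in the nontrivial coset, so inversion interchanges the two orbits and $g_1,g_2$ are reciprocal of each other. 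Thus the whole question reduces to deciding whether $4p$ is a good integer for the pair $(q,1)$, i.e. whether $4p\in G_{(q,1)}$.

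At this point I would invoke Lemma \ref{lem2} with $l_1=q$, $l_2=1$, $s=p$ and $\beta=2$: the hypotheses hold because $q,1,p$ are pairwise coprime odd integers with $p>1$. Since $q\equiv3\pmod4$ forces $\mathrm{ord}_4(q)=2$, the first condition of the lemma is automatic, and the lemma then gives $4p\in G_{(q,1)}$ if and only if $2\parallel\mathrm{ord}_p(q)$, i.e. if and only if $\mathrm{ord}_p(q)\equiv2\pmod4$. Combined with the previous paragraph this is exactly the stated dichotomy: case (1) when $\mathrm{ord}_p(q)\equiv2\pmod4$ (both $g_i$ self-reciprocal) and case (2) otherwise ($g_1,g_2$ reciprocal of each other). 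The primitive-root remark is then immediate, since $q$ primitive mod $p$ means $\mathrm{ord}_p(q)=p-1$, and $p\equiv3\pmod4$ makes $p-1\equiv2\pmod4$, placing us in case (1).

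I expect the only real obstacle to be the $2$-adic order bookkeeping modulo $4p$: reconciling $\mathrm{ord}_4(q)$ with $\mathrm{ord}_p(q)$ so as to decide membership of $-1$ in $\langle q\rangle$. This is precisely the information Lemma \ref{lem2} packages, so the substantive step is the translation of self-reciprocity into the good-integer condition $4p\in G_{(q,1)}$; once that bridge is in place, the parity split on $\mathrm{ord}_p(q)\bmod4$ falls out of the lemma with $\beta=2$ and $s=p$, and the rest is routine cyclotomic factorization.
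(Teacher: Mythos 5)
Your proof is correct and follows essentially the same route as the paper's: the same cyclotomic factorization $x^{2p}+1=(x^2+1)\Phi_{4p}(x)$ with the number and degree of the irreducible factors governed by $\mathrm{ord}_{4p}(q)$, and the same reduction of self-reciprocity to $-1\in\langle q\rangle\subseteq(\mathbb{Z}/4p\mathbb{Z})^{\ast}$, i.e.\ to the good-integer condition $4p\in G_{(q,1)}$, which is then settled by Lemma~\ref{lem2} exactly as in the paper. Your coset formulation of the inversion dichotomy is a cleaner write-up of the paper's Frobenius-orbit argument (and your handling of the primitive-root case simply places it in case (1), whereas the paper additionally notes that primitivity mod $p$ by itself forces $\mathrm{ord}_{4p}(q)=p-1$), but it is not a different method.
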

\begin{proof} Let $ p\equiv 3~({mod}~4),$ $q\equiv 3~({mod}~4)$ and ord$_{4p}(q)=p-1$.
 If $n=2p, p$ is odd prime, it follows from Theorem 2.45 in \cite{RL} that $x^{2p}+1=\frac{x^{4p}-1}{x^{2p}-1}=(x^2+1)Q_{4p}(x),$ where
 $Q_r()$ denotes the cyclotomic polynomial of order $r.$ As is well-known its degree is $\phi(r),$  {\it Euler's  totient function} that counts the number of integers $m$ with $1\leq m\leq r$ that are relatively prime to $r$. It is easy to check that
 $Q_{4p}(x)$ divides $x^{4p}-1=(x^{2p}-1)(x^{2p}+1),$ but not $x^{2p}-1,$ the said factorization follows.
  Note that, since $q\equiv 3~({\rm mod}~4),$
 $x^2+1$ is an irreducible polynomial over $\mathbb{F}_q$. We compute $\phi(4p)=\phi(4)\phi(p)=2(p-1)$.

 We need $Q_{4p}(x)$ to be factored into $2$ distinct monic irreducible polynomials in $\mathbb{F}_q[x]$ of the same degree $d$. By Theorem 2.47 in \cite{RL}, we obtain $d=\frac{\phi(4p)}{2}=p-1$, where $d$ is
  the multiplicative order of $q$ modulo $4p,$ that is
  the least positive integer $j$ such that $q^j\equiv 1~({\rm mod}~4p).$ Thus ${\rm ord}_{4p}(q)=p-1$.
\begin{enumerate}
\item The following observation shows that the order of $q$ mod $p$ is $\equiv 2 \pmod{4},$ then $g_1(x), g_2(x)$ are self-reciprocal polynomials. If $x^{n}+1=(x^2+1)g_1(x)g_2(x),$ let $\alpha$ be a root of $g_1(x)$, i.e., $g_1(\alpha)=0$, assume that $g_1(x)$ is a self-reciprocal polynomial, then $g_1(\alpha^{-1})=0$, and $\alpha^{-1}\in \{\alpha, \alpha^q, \ldots, \alpha^{q^{deg(g_1(x))-1}}\}$. Thus there exists an integer $i$ such that $\alpha^{-1}=\alpha^{q^i},$ where $0\leq i\leq p-2$ and $i$ is an odd integer. Since $\alpha$ is of order $4p,$ this yields $q^i\equiv -1~({\rm mod}~4p)$. We obtain that $4p\mid q^i+1$, which shows $4p$ is a \emph{good integer} for $l_1=q,$ and $l_2=1.$ According to Lemma \ref{lem2}, we have $4p\in G_{(q, 1)}$ if and only if $2\parallel {\rm ord}_{p}(q)$ ( Note that the condition ord$_4(q)=2$ is equivalent to $q \equiv 3 \pmod{4}$ in this theorem). The necessary condition in Lemma \ref{lem2}, that $2\parallel {\rm ord}_{4p}(q)$ is equivalent to $p \equiv 3 \pmod{4}.$

If $q$ is primitive modulo $p,$ that is if ord$_{p}(q)=p-1$ then we claim that ord$_{4p}(q)=p-1.$ Indeed $q^{p-1}\equiv 1 \pmod{p}.$ Since $q\equiv 3~({\rm mod}~4),$ and $p$ is odd then $q^{p-1}\equiv 1 \pmod{4}.$
By the CRT on integers we have $q^{p-1}\equiv 1 \pmod{4p},$ hence that ord$_p(q)\equiv 2 \pmod{4},$ divides $p-1.$ But ord$_{p}(q)$ divides ord$_{4p}(q)$ by the definition of the order. That gives us ord$_{4p}(q)=p-1$ and the three factor factorization. Since $p\equiv 3~({\rm mod}~4),$ we have ord$_p(q)=p-1\equiv 2 \pmod{4},$ and the condition of self-reciprocity  is also satisfied.
\item  Since $\frac{x^{2p}+1}{x^2+1}$ is self-reciprocal, either $f_1(x)$ and $f_2(x)$ are both self-reciprocal, or they are reciprocal of each other. As the good character of $4p$ is equivalent to the first case of that alternative, by the above observations, the result follows.
\end{enumerate}
\end{proof}
\begin{rem} It is important to observe that ${ord}_{p}(q)$ divides ${ord}_{4p}(q)$ but may be not equal as shows the example $p=11,\, q=27$ when ${ord}_{4p}(q)=10$ but ${ord}_{p}(q)=5.$

\end{rem}
\begin{rem} Let $q\equiv 3~({mod}~4).$ According to the definition of good integers, in fact, $s=4p, l_1=q, l_2=1$ and $k=i$ in Theorem \ref{tem}. That is to say, the condition ``if the order of $q$ mod $p$ is $\equiv 2 \pmod{4}$", ``if there exists an odd integer $i$ such that $q^i\equiv -1~({mod}~4p)$, where $0\leq i\leq p-2$" and ``if $4p$ is a good integer for some integer $0\leq i\leq p-2$ and $i$ is an odd integer" are equivalent. Since $i$ is an odd integer, we say that $4p$ is an \textbf{oddly-good} in the sense of \cite{J1}.
\end{rem}

\begin{ex} Take $n=14$ and $q=3$, then $p=7$ and $q\equiv 3~(mod~4)$. By a simple calculation, we have ord$_{28}(3)=6$ and ord$_7(3)=6\equiv2~({mod}~4)$. From Theorem \ref{tem},.1, we obtain $$x^{14}+1=(x^{2}+1)(x^6+x^5+x^3+x+1)(x^6+2x^5+2x^3+2x+1),$$ there exists an odd integer $i=3$ such that $q^3\equiv -1~({mod}~4p)$.
\end{ex}
\begin{ex} Take $n=22$ and $q=7$, then $p=11$ and $q\equiv 3~(mod~4)$. By a simple calculation, we have ord$_{44}(7)=10$ and ord$_{11}(7)=10\equiv2~({mod}~4)$. From Theorem \ref{tem},.1, we obtain
\begin{eqnarray*}
% \nonumber to remove numbering (before each equation)
  x^{22}+1 &=& (x^{2}+1)(x^{10}+2x^9+5x^8+2x^7+6x^6+5x^5+6x^4+2x^3+5x^2+2x+1) \\
   && \cdot(x^{10}+5x^9+5x^8+5x^7+6x^6+2x^5+6x^4+5x^3+5x^2+5x+1),
\end{eqnarray*}
there exists an odd integer $i=5$ such that $q^5\equiv -1~({mod}~4p)$.
\end{ex}
\begin{ex} Take $n=6$ and $q=11$, then $p=3$ and $q\equiv 3~(mod~4)$. By a simple calculation, we have ord$_{12}(11)=2$ and ord$_{3}(11)=2\equiv2~({ mod}~4)$. From Theorem \ref{tem},.1, we obtain $$x^{6}+1=(x^{2}+1)(x^2+5x+1)(x^2+6x+1),$$ there exists an odd integer $i=1$ such that $q\equiv -1~({mod}~4p)$.
\end{ex}
The next examples show what happens in the three factors factorization when $4p$ is not a good integer w.r.t. $q$ and $1.$

\begin{ex} Take $n=22$ and $q=3$, then $p=11$ and $q\equiv 3~(mod~4)$. By a simple calculation, we have ord$_{44}(3)=10$ and ord$_{11}(3)=5\not\equiv2~({mod}~4)$. From Theorem \ref{tem},.2, we obtain $$x^{22}+1=(x^{2}+1)(x^{10}+2x^6+2x^4+2x^2+1)(x^{10}+2x^8+2x^6+2x^4+1),$$ there does not exist an odd integer $i$ such that $q^i\equiv -1~({mod}~4p)$, where $0\leq i\leq p-2$.
\end{ex}

\begin{ex} Take $n=14$ and $q=11$, then $p=7$ and $q\equiv 3~(mod~4)$. By a simple calculation, we have ord$_{28}(11)=6$ and ord$_{7}(11)=3\not\equiv2~({mod}~4)$. From Theorem \ref{tem},.2, we obtain $$x^{14}+1=(x^{2}+1)(x^6+4x^4+6x^2+1)(x^6+6x^4+4x^2+1),$$ there does not exist an odd integer $i$ such that $q^i\equiv -1~({mod}~4p)$, where $0\leq i\leq p-2$.
\end{ex}

\section{\textbf{Exact enumeration}}
 In this section, the self-dual form is due
to the Hermitian inner product. We first recall the definition of the Hermitian inner product.
\begin{Def}
Let $q$ be a prime power. Define the conjugate $\overline{x}$ of $x\in \mathbb{F}_q$ by $\overline{x}=x^{\sqrt{q}}.$ The Hermitian inner product of $\textbf{u}$ and $\textbf{v}$ in $\mathbb{F}_q^n$ is defined by $$\textbf{u}\cdot \textbf{v}=\sum\limits_{i=1}^n u_i\overline{v_i}.$$
\end{Def}
We shall need the following result on a special diagonal equation over a finite field.
\begin{lem}\cite[Cor. 10.1]{A2}\label{a2}
If $q$ is odd, then the number of solutions $(a,b)$ in $\mathbb{F}_{q^2}$ of the equation $a^{1+q}+b^{1+q}=-1$ is $(q+1)(q^2-q)$.
\end{lem}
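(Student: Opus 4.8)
The plan is to recognize the exponent $1+q$ as the norm form of the extension $\mathbb{F}_{q^2}/\mathbb{F}_q$, which reduces the two-variable diagonal equation to a short counting argument over $\mathbb{F}_q$. Writing $N(x)=x^{1+q}=x\cdot x^q$, the map $N:\mathbb{F}_{q^2}\to\mathbb{F}_q$ is the field norm, so for every $a\in\mathbb{F}_{q^2}$ we have $a^{1+q}=N(a)\in\mathbb{F}_q$. The equation $a^{1+q}+b^{1+q}=-1$ therefore becomes $N(a)+N(b)=-1$, an identity in which the two summands range independently over $\mathbb{F}_q$.

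First I would pin down the fiber sizes of $N$. Since $\mathbb{F}_{q^2}^{\times}$ is cyclic of order $q^2-1=(q-1)(q+1)$, the restriction $N:\mathbb{F}_{q^2}^{\times}\to\mathbb{F}_q^{\times}$ is a group homomorphism whose kernel consists of the solutions of $x^{q+1}=1$; in a cyclic group of order $q^2-1$ this kernel has order $\gcd(q+1,q^2-1)=q+1$. Hence $N$ is surjective onto $\mathbb{F}_q^{\times}$ with every nonzero fiber of size $q+1$, while $N^{-1}(0)=\{0\}$. Writing $n(u)=\#\{a:N(a)=u\}$, this gives $n(0)=1$ and $n(u)=q+1$ for each $u\in\mathbb{F}_q^{\times}$.

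Next I would express the total count as $\sum_{u\in\mathbb{F}_q} n(u)\,n(-1-u)$, obtained by setting $u=N(a)$ and forcing $N(b)=-1-u$, and then split the sum according to which of $u$ and $-1-u$ vanishes. Because $q$ is odd we have $-1\neq 0$, so the two degenerate terms $u=0$ (giving $v=-1$) and $u=-1$ (giving $v=0$) are distinct and each contribute $1\cdot(q+1)=q+1$, whereas the remaining $q-2$ values of $u$ (those with $u\neq 0,-1$) each contribute $(q+1)^2$. Summing yields $2(q+1)+(q-2)(q+1)^2=(q+1)\bigl[2+(q-2)(q+1)\bigr]=(q+1)(q^2-q)$, as claimed.

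The only genuine content is the fiber-size computation for the norm map; everything else is bookkeeping. I expect the single mild subtlety to be correctly isolating the boundary cases where $a=0$ or $b=0$, using $q$ odd to guarantee that $0$ and $-1$ are distinct in $\mathbb{F}_q$ so that these two cases never coincide. An alternative route through Gauss or Jacobi sums for diagonal equations is available but unnecessarily heavy for this norm equation.
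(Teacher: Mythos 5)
Your proof is correct. Note, however, that the paper itself contains no proof of this lemma: it is imported verbatim by citation to \cite[Cor.\ 10.1]{A2}, so there is no internal argument to compare against, and your write-up actually supplies what the paper only quotes. Your route is the standard elementary one and every step checks out: the norm map $N(x)=x^{1+q}$ restricted to $\mathbb{F}_{q^2}^{\times}$ is a homomorphism onto $\mathbb{F}_q^{\times}$ with kernel of size $\gcd(q+1,q^2-1)=q+1$, so $n(0)=1$ and $n(u)=q+1$ for $u\neq 0$; the convolution $\sum_{u\in\mathbb{F}_q} n(u)\,n(-1-u)$ is the right bookkeeping; the hypothesis that $q$ is odd is used exactly where it should be (to ensure $0\neq -1$, so the two degenerate terms $u=0$ and $u=-1$ are distinct and each contribute $q+1$); and the algebra
\begin{equation*}
2(q+1)+(q-2)(q+1)^2=(q+1)\bigl[2+(q-2)(q+1)\bigr]=(q+1)(q^2-q)
\end{equation*}
is right. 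The cited source sits inside the general machinery of diagonal (Hermitian) equations over finite fields, where such counts typically come from Jacobi-sum formulas or point counts on Hermitian varieties; your argument buys the same result with nothing beyond the cyclicity of $\mathbb{F}_{q^2}^{\times}$, which is arguably preferable for a self-contained presentation. One small remark: you need not worry about whether $u=-1-u$ can occur inside the sum, since the convolution counts ordered pairs and requires no distinctness; your proof implicitly handles this correctly by summing over all $u$.
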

This result is now used to count certain self-dual QT codes.
\begin{thm}\label{a4}
Let $n=2p,$ with $p$ a prime $\equiv 3~({mod}~4).$ Let $q$ be a prime power satisfying  $q\equiv 3~({mod}~4),$ ord$_{4p}(q)=p-1$ and $(p,q)=1.$ If the order of $q$ mod $p$ is $\equiv 2 \pmod{4},$ then $x^{n}+1$ can be factored into the product of $3$ irreducible polynomials over $\mathbb{F}_q:$ $$x^{n}+1=(x^2+1)g_1(x)g_2(x),$$ where $g_1(x),g_2(x)$ are self-reciprocal polynomials over $\mathbb{F}_q$ with $deg(g_1(x))=deg(g_2(x))=p-1$. If these conditions are satisfied then
\begin{description}
  \item[(1)]  the number of self-dual double-negacirculant codes of length $2n$ over $\mathbb{F}_q$ with generator $\langle (1,h)  \rangle$ is $(q+1)(q^{\frac{p-1}{2}}+1)^2$.
  \item[(2)]  the number of self-dual four-negacirculant codes of length $4n$ over $\mathbb{F}_q$ with $2$-generator $\langle (1,0,a,b), (0,1,-b^\prime, a^\prime)  \rangle$ is $(q+1)(q^2-q)(q^{\frac{p-1}{2}}+1)^2(q^{p-1}-q^{\frac{p-1}{2}})^2$.
\end{description}
\end{thm}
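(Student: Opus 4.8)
The plan is to push everything through the Chinese Remainder Theorem decomposition of $R=\mathbb{F}_q[x]/\langle x^n+1\rangle$ supplied by Theorem \ref{tem}. First I would record that, because $x^2+1$ is irreducible and $g_1,g_2$ are irreducible of degree $p-1$, the CRT gives a ring isomorphism
$$R\cong \mathbb{F}_{q^2}\times \mathbb{F}_{q^{p-1}}\times \mathbb{F}_{q^{p-1}},$$
each factor a finite field. A self-dual double- (resp. four-) negacirculant code is the $R$-module generated by $(1,h)$ (resp. by $(1,0,a,b)$ and $(0,1,-b',a')$), and under CRT such a self-dual module splits as a product of self-dual modules over the three component fields. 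Counting the codes thus reduces to counting admissible generators in each component and multiplying.

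The key technical point, and the step I expect to need the most care, is the conjugation. In $R$ the negacirculant transpose corresponds to reciprocation $a(x)\mapsto a'(x)=a(x^{-1})$ via $x^{-1}=-x^{n-1}$. I would check that on each component this map is a Frobenius power: on the $\mathbb{F}_{q^2}$ factor it is $z\mapsto z^q$, the conjugation of $\mathbb{F}_{q^2}/\mathbb{F}_q$ (since $i^{-1}=i^q$ when $q\equiv 3\pmod 4$), while on each $\mathbb{F}_{q^{p-1}}$ factor the relation $\alpha^{-1}=\alpha^{q^i}$ with $i$ odd forces the map $z\mapsto z^{q^{(p-1)/2}}$, the conjugation of $\mathbb{F}_{q^{p-1}}/\mathbb{F}_{q^{(p-1)/2}}$. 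Self-reciprocity of $g_1,g_2$ is exactly what makes reciprocation an involution on each factor, so that the Euclidean self-duality condition of the Proposition becomes a Hermitian-norm equation componentwise.

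With conjugation settled I would translate the conditions. For (1), self-duality of $\langle(1,h)\rangle$ is $1+h\,h'=0$, which componentwise reads $N(h_j)=-1$ for the relevant norm $N$. Since $-1$ lies in each fixed field and each norm is surjective onto it, the solution count is the kernel size: $q+1$ on the $\mathbb{F}_{q^2}$ factor and $q^{(p-1)/2}+1$ on each $\mathbb{F}_{q^{p-1}}$ factor, giving $(q+1)(q^{(p-1)/2}+1)^2$. For (2), the condition $AA^t+BB^t+I_n=0$ becomes $a\,a'+b\,b'+1=0$, i.e. $N(a_j)+N(b_j)=-1$ in each component; one checks (as in the Proposition) that once $A,B$ are negacirculant this single equation already forces $G_2G_2^t=0$, so no cross-terms remain. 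On the $\mathbb{F}_{q^2}$ factor this is precisely Lemma \ref{a2}, giving $(q+1)(q^2-q)$ solutions. On each degree-$(p-1)$ factor I set $Q=q^{(p-1)/2}$, use $\mathbb{F}_{q^{p-1}}=\mathbb{F}_{Q^2}$ with norm $z\mapsto z^{1+Q}$, so the equation is $a^{1+Q}+b^{1+Q}=-1$; applying Lemma \ref{a2} with $q$ replaced by $Q$ yields $(Q+1)(Q^2-Q)=(q^{(p-1)/2}+1)(q^{p-1}-q^{(p-1)/2})$ solutions.

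Multiplying the three component counts then gives the stated formula in (2). The only arithmetic checks required are that $Q=q^{(p-1)/2}$ is odd, so Lemma \ref{a2} applies (immediate since $q$ is odd), and that independence of the CRT components legitimizes taking products. The genuine obstacle is the conjugation bookkeeping of the second paragraph: pinning down the exponent $q^{(p-1)/2}$ and confirming that reciprocation restricts to the quadratic-extension conjugation on each self-reciprocal factor. Once that is in place, both enumerations are a clean product of norm-equation counts.
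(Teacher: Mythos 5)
Your proposal is correct and follows essentially the same route as the paper: CRT decomposition into $\mathbb{F}_{q^2}\times\mathbb{F}_{q^{p-1}}\times\mathbb{F}_{q^{p-1}}$, identification of the reciprocation map with the quadratic-subfield conjugation on each component (the paper cites the Hermitian product of Ling--Sol\'e where you argue via the unique involution in the Galois group), reduction of self-duality to the norm equations $N(h)=-1$ and $N(a)+N(b)=-1$, and Lemma \ref{a2} applied with $q$ and with $Q=q^{(p-1)/2}$. The only cosmetic difference is that for part (1) the paper exhibits the $q+1$ solutions explicitly as $\varepsilon^j\zeta^{(q-1)/2}$, whereas you count them as a coset of the norm-one kernel; both are valid.
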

\begin{proof} \textbf{(1)} Let $\mathbb{F}_{q^2}^*=\langle\zeta \rangle$ and $\varepsilon=\zeta^{q-1}$. A factor $x^2+1$ of degree $2$ leads to counting self-dual Hermitian codes of length $2$ over $\mathbb{F}_{q^2}$, that
is to count the solutions of the equation $1+hh^q=0$ over that field. Solving this equation gives the desired result. Furthermore, all the roots of the equation $1+hh^q=0$ are simple and are given by the $1+q$ distinct elements $\zeta^{\frac{q-1}{2}},\varepsilon\zeta^{\frac{q-1}{2}},\ldots,\varepsilon^q\zeta^{\frac{q-1}{2}}$ of $\F_{q^2}$.

Similar to the above discussion, then the number of Hermitian self-dual codes over $\mathbb{F}_{q^{p-1}}$ for the self-reciprocal factor $g_j(x), j=1,2$ of degree $p-1$ is $1+q^{\frac{p-1}{2}}$. Thus the conclusion \textbf{(1)} holds.

\textbf{(2)} In case of a self-reciprocal factor $x^2+1$ of degree $2$, we calculate the number of Hermitian self-dual codes of length $4$ over $\mathbb{F}_{q^2}$. Writing the generator matrix of such a code in the form $\langle (1,0,a_1,b_1), (0,1,-b_1^\prime, a_1^\prime)  \rangle$,
we must count the solutions of the equation:
\begin{equation*}
 \small
\begin{cases}
\emph{ }1+a_1{a_1}^q+b_1{b_1}^q=0, i.e., {a_1}^{1+q}+{b_1}^{1+q}=-1, \\
 \emph{ }1+a_1^\prime{a_1^\prime}^q+b_1^\prime{b_1^\prime}^q=0, i.e., {a_1^\prime}^{1+q}+{b_1^\prime}^{1+q}=-1, \\
 \emph{ }-a_1{b_1^\prime}^q+b_1{a_1^\prime}^q=0. \\
\end{cases}
\end{equation*}
Using the Hermitian scalar product of \cite[Remark 2, p. 2753]{LS}, we see that the prime acts like conjugation $z \mapsto z^q$ over $\mathbb{F}_{q^2}$. Thus $a_1^\prime=a_1^{q}, b_1^\prime=b_1^{q}.$
%In fact, we know that there exists a $i=1$ such that $a_1^\prime=a_1^{q}, b_1^\prime=b_1^{q}$, if $a_1,b_1$ is determined,
%then $a_1^{\prime}, b_1^{\prime}$ will be determined, thus here
This shows that the first two equations are equivalent and that the third one is trivial. So
we only need to compute the number of solutions of the equation $a_1^{1+q}+b_1^{1+q}=-1~(*).$ According to Lemma \ref{a2}, then the number of the solutions $(a_1,b_1)$ in $\mathbb{F}_{q^2}$ of $(*)$ is $(q+1)(q^2-q)$.

In case of a self-reciprocal factor $g_1(x)$ of degree $p-1$, we calculate the number of Hermitian self-dual codes of length $4$ over $\mathbb{F}_{q^{p-1}}$. Writing the generator matrix of such a code in the form $\langle (1,0,a_2,b_2), (0,1,-b_2^\prime, a_2^\prime)  \rangle$, it suffices to solve the following system of two equations:
\begin{equation}\label{1}
 \small
\begin{cases}
\emph{ }1+a_2{a_2}^{q^{\frac{p-1}{2}}}+b_2{b_2}^{q^{\frac{p-1}{2}}}=0, i.e., {a_2}^{1+q^{\frac{p-1}{2}}}+{b_2}^{1+q^{\frac{p-1}{2}}}=-1,\\
 \emph{ }1+a_2^\prime{a_2^\prime}^{q^{\frac{p-1}{2}}}+b_2^\prime{b_2^\prime}^{q^{\frac{p-1}{2}}}=0, i.e., {a_2^\prime}^{1+{q^{\frac{p-1}{2}}}}+{b_2^\prime}^{1+{q^{\frac{p-1}{2}}}}=-1.  \\
  \emph{ }-a_2{b_2^\prime}^{q^{\frac{p-1}{2}}}+b_2{a_2^\prime}^{q^{\frac{p-1}{2}}}=0.
\end{cases}
\end{equation}
Using the Hermitian scalar product of \cite{LS}, we see that the prime acts like conjugation $z \mapsto z^{q^{\frac{p-1}{2}}}$ over $\mathbb{F}_{q^{p-1}}$.
Thus $a_2^\prime=a_2^{q^{\frac{p-1}{2}}}, b_2^\prime=b_2^{q^{\frac{p-1}{2}}}.$ This shows that the first two equations are equivalent and that the third one is trivial.
Applying Lemma \ref{a2}, then the number of the solutions $(a_2,b_2)$ in $\mathbb{F}_{q^{p-1}}$ of Equation (\ref{1}) is $(q^{\frac{p-1}{2}}+1)(q^{p-1}-q^{\frac{p-1}{2}})$.

By working in the same way as the case $g_1(x)$, for the self-reciprocal factor $g_2(x)$ of degree $p-1$, we can also deduce that the number of Hermitian self-dual codes over $\mathbb{F}_{q^{p-1}}$ is $(q^{\frac{p-1}{2}}+1)(q^{p-1}-q^{\frac{p-1}{2}})$. This completes the proof of statement \textbf{(2)}.
\end{proof}
\section{\textbf{ Asymptotics of double (resp. four)-negacirculant codes}}
In this section, assume that $q\equiv 3~({{\rm mod}}~4)$ and $p$ is an odd prime with $p\equiv 3~({{\rm mod}}~4), (p,q)=1.$ Set $n=2p.$ As discussed in Section 3, we consider $x^{n}+1=(x^2+1)g_1(x)g_2(x),$ where $g_1(x),g_2(x)$ are self-reciprocal polynomials over $\mathbb{F}_q[x]$ with ${\rm deg}(g_1(x))={\rm deg}(g_2(x))=p-1$. By the Chinese Remainder Theorem (CRT), we have
\begin{eqnarray*}
% \nonumber to remove numbering (before each equation)
  \mathbb{F}_q[x]/\langle x^n+1\rangle &\cong& {\mathbb{F}_q[x]}/{\langle x^2+1\rangle}\oplus {\mathbb{F}_q[x]}/{\langle g_1(x)\rangle}\oplus {\mathbb{F}_q[x]}/{\langle g_2(x)\rangle} \\
   &\cong& \mathbb{F}_{q^2} \oplus  \mathbb{F}_{q^{p-1}}\oplus \mathbb{F}_{q^{p-1}}.
\end{eqnarray*}

\begin{lem}\label{lem} If $u\neq0$ has Hamming weight $<2p$,
\begin{description}
  \item[(1)] there are at most $q^2(q^{\frac{n-2}{4}}+1)$ vectors $h$ such that $u\in C_{h}=\langle (1,h)\rangle$ and $C_{h}=C_{h}^\perp$.
  \item[(2)] there are at most $q^4(q^{\frac{n-2}{4}}+1)(q^{\frac{n-2}{2}}-q^{\frac{n-2}{4}})q^{\frac{n-2}{2}}$ pairs $(a,b)$ such that $u\in C_{a,b}=\langle (1,0,a,b), (0,1,-b^\prime, a^\prime)\rangle$ and $C_{a,b}=C_{a,b}^\perp$.
\end{description}

\end{lem}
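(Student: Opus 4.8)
The plan is to bound, for a fixed nonzero codeword $u$ of weight $<2p$, the number of self-dual codes containing $u$, by passing through the CRT decomposition of $R=\mathbb{F}_q[x]/\langle x^n+1\rangle \cong \mathbb{F}_{q^2}\oplus\mathbb{F}_{q^{p-1}}\oplus\mathbb{F}_{q^{p-1}}$ recorded just above the lemma. The key observation is that since $u\neq 0$ has weight strictly less than $2p=n$, the polynomial(s) attached to $u$ cannot be divisible by all three irreducible factors $x^2+1$, $g_1(x)$, $g_2(x)$ simultaneously — otherwise $u$ would be the zero codeword in $R$ (for part (1)) or a zero pair (for part (2)). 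Hence under the CRT isomorphism, the image of $u$ is nonzero in at least one of the three component fields. I would make this the pivotal dichotomy.

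For part (1), the containment $u\in C_h=\langle(1,h)\rangle$ together with self-duality $C_h=C_h^\perp$ translates componentwise: in each factor field the local condition $1+hh^q=0$ (as in Theorem \ref{a4}) pins down $h$ to lie in a set whose size I already know — namely $1+q$ solutions over $\mathbb{F}_{q^2}$ and $1+q^{\frac{p-1}{2}}$ solutions over each $\mathbb{F}_{q^{p-1}}$. The membership $u\in C_h$ imposes, in any component where $u$ is \emph{nonzero}, a linear relation that forces the local $h$ to be uniquely determined (or constrained to at most one value). So I would count by first choosing the component where $u$ is nonzero — there the local $h$ is essentially forced — and then letting $h$ range freely over the self-dual solution set in the remaining components. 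Taking the component with the largest remaining freedom gives the stated bound $q^2(q^{\frac{n-2}{4}}+1)$, where $q^{\frac{n-2}{4}}=q^{\frac{p-1}{2}}$ and the factor $q^2$ absorbs the at-most-one-value slack. I would phrase this as: the number of admissible $h$ is at most the product of the self-dual counts over the components where $u$ vanishes, times a bounded factor.

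For part (2), the same strategy applies but the local objects are now pairs $(a,b)$ subject to the system in Theorem \ref{a4}, whose solution counts I reuse: $(q+1)(q^2-q)$ over $\mathbb{F}_{q^2}$ and $(q^{\frac{p-1}{2}}+1)(q^{p-1}-q^{\frac{p-1}{2}})$ over each $\mathbb{F}_{q^{p-1}}$. Again, in any component field where the image of $u$ is nonzero, membership $u\in C_{a,b}$ cuts down the local $(a,b)$ to a subvariety of controlled size; in the remaining components the pair ranges over the full self-dual solution set. Multiplying the relevant factors and identifying $q^{\frac{n-2}{4}}=q^{\frac{p-1}{2}}$ and $q^{\frac{n-2}{2}}=q^{p-1}$ yields $q^4(q^{\frac{n-2}{4}}+1)(q^{\frac{n-2}{2}}-q^{\frac{n-2}{4}})q^{\frac{n-2}{2}}$, the extra $q^{\frac{n-2}{2}}$ and the prefactor $q^4$ coming from the constrained component.

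\textbf{The main obstacle} I anticipate is making precise, in each of the two parts, exactly how much freedom the local pair/element retains in the component where $u$ is nonzero, and verifying that the worst case (largest count) is achieved when $u$ is nonzero only in the small factor $x^2+1$. One must check that imposing $u\in C$ in a component forces the local generator data to lie in a set of the claimed size rather than the full self-dual count, and that the crude factors $q^2$, $q^4$ genuinely dominate the lost freedom; I would handle this by treating each possible nonzero-support pattern of $u$ separately and taking the maximum, rather than seeking a uniform argument.
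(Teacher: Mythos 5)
Your overall framework (CRT componentwise analysis, membership forcing the local generator in components where $u$ is visible, self-duality bounding it elsewhere) is the same as the paper's, but there is a genuine gap at the heart of your argument: you never actually use the hypothesis $wt(u)<2p$. Your ``key observation'' --- that the image of $u$ cannot vanish in all three CRT components --- is merely a restatement of $u\neq 0$. The real role of the weight hypothesis is to \emph{exclude} the configuration in which the coordinate polynomials of $u$ are nonzero yet lie in the ideal $(g_1(x)g_2(x))$, i.e.\ the CRT image of $u$ is supported only in the small component $\mathbb{F}_{q^2}$. In that configuration, membership in the code imposes no condition whatsoever on the generator modulo $g_1$ and modulo $g_2$, so your scheme leaves $h$ free over both large self-dual solution sets, giving $(1+q^{\frac{p-1}{2}})^2$ candidates in part (1); for $p\geq 7$ this exceeds the claimed bound $q^2(q^{\frac{p-1}{2}}+1)$, and asymptotically it is of order $q^{\frac{n-2}{2}}$ versus $q^{2+\frac{n-2}{4}}$, which would also wreck the Varshamov--Gilbert argument of Section 5. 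The paper kills this case explicitly: writing $u=(v,w)$, if $v\equiv 0 \pmod{g_1g_2}$ and $v\neq 0$, then $wt(v)\geq p$ (nonzero multiples of $g_1g_2$ have weight at least $p$), and self-duality forces $h\not\equiv 0 \pmod{x^2+1}$, hence $w=vh$ is again a nonzero multiple of $g_1g_2$, so $wt(u)\geq 2p$, a contradiction; while $v=0$ forces $w=vh=0$, i.e.\ $u=0$. Your ``main obstacle'' paragraph shows the confusion plainly: you propose to \emph{verify} that the worst case is when $u$ is nonzero only in the factor $x^2+1$, when in fact that case must be proved \emph{impossible} --- if it could occur, the stated bound would be false under your counting.

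The same omission recurs, amplified, in part (2): there one must show that no two of $c,d,e,f$ can both lie in $(g_1g_2)\setminus\{0\}$ (again via the weight hypothesis), since otherwise membership gives no information modulo either $g_1$ or $g_2$ and the count balloons to order $q^{4+\frac{3(n-2)}{2}}$, well above the stated $q^4(q^{\frac{n-2}{4}}+1)(q^{\frac{n-2}{2}}-q^{\frac{n-2}{4}})q^{\frac{n-2}{2}}$. Your part-(2) sketch also glosses over where the unconditioned factor $q^{\frac{n-2}{2}}$ really comes from: in the paper it arises in the case $c\equiv d\equiv 0\pmod{g_1}$ with $d\not\equiv 0\pmod{g_2}$, where modulo $g_1$ only self-duality constrains $(a,b)$ (at most $(q^{\frac{p-1}{2}}+1)(q^{p-1}-q^{\frac{p-1}{2}})$ pairs), while modulo $g_2$ the membership equations determine $b$ from a free choice of $a$, leaving $q^{p-1}$ local pairs. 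Without the excluded-case argument and this explicit case analysis, your proposal does not deliver the stated bounds.
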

\begin{proof} \textbf{(1)} Let $C_h=\langle (1, h) \rangle$ and $u=(v, w)$ with $v, w$ vectors of length $n$. The condition of $u\in C_h$ is equivalent to the equations modulo $x^n+1$: $w=vh$. By hypothesis $u\neq 0$ and $wt(u)<2p$, imply $v, w$ cannot both in $(g_1(x)g_2(x))\backslash \{0\}$.

If $v\equiv 0~({\rm mod}~g_1(x)g_2(x))$, then $wt(u)=2p$, it is a contradiction.

If $v\not\in (g_1(x)g_2(x))\backslash \{0\}$, we have the following cases:
\begin{itemize}
\item If $v\equiv 0~({\rm mod}~g_1(x))$, then $1+hh^{q^{\frac{p-1}{2}}}=0$ has $1+q^{\frac{p-1}{2}}$ solutions, and $v\not\equiv 0~({\rm mod}~g_2(x))$, then $h=\frac{w}{v}~{\rm mod}~g_2(x)$ has a unique solution. By the CRT in this case, there are at most $q^2(q^{\frac{n-2}{4}}+1)$ vector $h$ such that $u\in C_{h}=\langle (1,h)\rangle$ and $C_{h}=C_{h}^\perp$. (The $q^2$ comes from the factor $(x^2+1)$).
\item The case $v\not\equiv 0~({\rm mod}~g_1(x))$ and $v\equiv 0~({\rm mod}~g_2(x))$ is similar to the preceding.
\item If $v\not\equiv 0~({\rm mod}~g_1(x))$ and $v\not\equiv 0~({\rm mod}~g_2(x))$, then $h=\frac{w}{v}~{\rm mod}~g_1(x)g_2(x)$ has a unique solution.
\end{itemize}

\textbf{(2)} Let $C_{a,b}=\langle (1,0,a,b), (0,1,-b^\prime, a^\prime)\rangle$ and $u=(c,d,e,f)$ with $c,d, e, f$ vectors of length $n$. The condition of $u\in C_{a,b}$ leads to the following equations modulo $x^n+1$:
\begin{equation}\label{den1}
 \small
\begin{cases}
 \emph{ }e=ca-db^\prime,  \\
   \emph{ }f=cb+da^\prime. \\
\end{cases}
\end{equation}
The hypotheses $u\neq 0$ and $wt(u)<2p$, imply that there cannot have any two of $\{c,d,e,f\}$ both in $(g_1(x)g_2(x))\backslash \{0\}$. Since $c, d$ cannot be both in $(g_1(x)g_2(x))\backslash \{0\}$, without loss of generality, let $d\not\in (g_1(x)g_2(x)).$ It requires to consider three cases:
\begin{enumerate}
  \item[1)] If $d\equiv 0~({\rm mod}~g_1(x))$ and $d\not\equiv 0~({\rm mod}~g_2(x))$;
  \item [2)] If $d\not\equiv 0~({\rm mod}~g_1(x))$ and $d\equiv 0~({\rm mod}~g_2(x))$;
  \item [3)] If $d\not\equiv 0~({\rm mod}~g_1(x))$ and $d\not\equiv 0~({\rm mod}~g_2(x))$.
\end{enumerate}
Next, we will discuss successively these three cases in turn:\\
\hspace*{0.5cm}1) If $d\equiv 0~({\rm mod}~g_1(x))$, we have \begin{equation*}\label{den1}
 \small
\begin{cases}
 \emph{ }e=ca~{\rm mod}~g_1(x),  \\
   \emph{ }f=cb~{\rm mod}~g_1(x). \\
\end{cases}
\end{equation*}
\begin{itemize}
  \item a) If $c\not\equiv 0~({\rm mod}~g_1(x))$, then there exists unique pair $(a,b)~({\rm mod}~g_1(x))$.
  \item b) If $c\equiv 0~({\rm mod}~g_1(x))$, then $c,d,e,f\equiv 0~({\rm mod}~g_1(x))$, and the above system does not bring any information on $a$ and $b.$
\end{itemize}
But we know that $C_{a,b}=C_{a,b}^\perp$, and thus ${a}^{1+q^{\frac{p-1}{2}}}+{b}^{1+q^{\frac{p-1}{2}}}=-1.$ Therefore, there are at most $(q^{\frac{p-1}{2}}+1)(q^{p-1}-q^{\frac{p-1}{2}})$ pairs $(a,b)~({\rm mod}~g_1(x))$.
And because $d\not\equiv 0~({\rm mod}~g_2(x))$, we obtain $b^\prime=\frac{ca-e}{d}~{\rm mod}~g_2(x).$ Thus, for a given $a$, $b$ is unique. Note that there are $q^{p-1}$ choices for $a$.

All the above show that, by the CRT in case 1), there are at most $q^4(q^{\frac{p-1}{2}}+1)(q^{p-1}-q^{\frac{p-1}{2}})q^{p-1}$ pairs $(a,b)$ such that $u\in C_{a,b}=\langle (1,0,a,b), (0,1,-b^\prime, a^\prime)\rangle$ and $C_{a,b}=C_{a,b}^\perp$ (The $q^4$ comes from the factor $(x^2+1)$).

2) This case is symmetric to 1) in the discussion, hence it is omitted.

3) In this case, for given $a$, there exists unique $b~({\rm mod}~g_1(x)g_2(x))$. There are $q^n$ choices for $a$, i.e., there are $q^{n}$ pairs $(a,b)$. Note that this is dominated by the count in $1),$
which is of order $O(q^{\frac{5n}{4}}).$

Hence, combining these three scenarios, we obtain that there are at most $q^4(q^{\frac{n-2}{4}}+1)(q^{\frac{n-2}{2}}-q^{\frac{n-2}{4}})q^{\frac{n-2}{2}}$ pairs $(a,b)$ such that $u\in C_{a,b}=\langle (1,0,a,b), (0,1,-b^\prime, a^\prime)\rangle$ and $C_{a,b}=C_{a,b}^\perp$.
\end{proof}

We are now ready to present the main results in this section.
\begin{thm} Under the hypotheses of Theorem 3.2, if $q$ is prime power and $q\equiv 3~({mod}~4),$ $n=2p$ and $(p,q)=1$, then
\begin{description}
  \item[(1)] there are infinite families of self-dual double-negacirculant codes of length $2n$ over $\F_q,$ of relative distance $\delta,$ satisfying $H_q(\delta)\geq\frac{1}{8}$.
  \item[(2)] there are infinite families of self-dual four-negacirculant codes of length $4n$ over $\F_q,$ of relative distance $\delta,$ satisfying $H_q(\delta)\geq\frac{1}{16}$.
\end{description}
\end{thm}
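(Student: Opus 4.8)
The plan is to prove both parts by the expurgation (double-counting) method, feeding in the exact counts of Theorem~\ref{a4} and the incidence bounds of Lemma~\ref{lem}. Fix an admissible $n$ and let $\mathcal{C}$ be the family in question (self-dual double-negacirculant codes of length $2n$, or self-dual four-negacirculant codes of length $4n$). For a target distance $d\le 2p$ I would count the incidences between codes and their nonzero low-weight codewords in two ways, obtaining
\[
\#\{C\in\mathcal{C}: d(C)<d\}\ \le\ \sum_{0<\mathrm{wt}(u)<d}\#\{C\in\mathcal{C}: u\in C\}.
\]
If the right-hand side is strictly less than $|\mathcal{C}|$, some code of $\mathcal{C}$ has minimum distance at least $d$; that code is the one we retain.

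Next I would substitute the quantitative inputs. By Lemma~\ref{lem}, each fixed nonzero $u$ of weight $<2p$ lies in at most $q^2(q^{(n-2)/4}+1)$ codes in the double case and at most $q^4(q^{(n-2)/4}+1)(q^{(n-2)/2}-q^{(n-2)/4})q^{(n-2)/2}$ codes in the four case, whose leading orders are $q^{n/4}$ and $q^{5n/4}$. The number of $u$ with $0<\mathrm{wt}(u)<d$ is at most $q^{LH_q(\delta)}$, up to subexponential factors, where $L$ is the length and $\delta=d/L$ (the Hamming-ball estimate recalled in Section~2). Finally, the totals from Theorem~\ref{a4} are $(q+1)(q^{(p-1)/2}+1)^2$ and $(q+1)(q^2-q)(q^{(p-1)/2}+1)^2(q^{p-1}-q^{(p-1)/2})^2$, of leading orders $q^{n/2}$ and $q^{3n/2}$ since $p-1=(n-2)/2$.

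Comparing leading exponents, the right-hand sum stays below $|\mathcal{C}|$ whenever $2H_q(\delta)+\tfrac14<\tfrac12$ in the double case, that is $H_q(\delta)<\tfrac18$, and whenever $4H_q(\delta)+\tfrac54<\tfrac32$ in the four case, that is $H_q(\delta)<\tfrac{1}{16}$. Hence for every $\delta$ below the relevant threshold and every large admissible $n$ there is a code of relative distance at least $\delta$. Taking, for each admissible $n$, a code whose minimum distance is the largest value permitted by the inequality, and recalling from Section~3 that (under the refined Artin conjecture) infinitely many admissible $n$ exist, produces infinite families; since $H_q$ is continuous and increasing on $(0,(q-1)/q)$, their relative distance $\delta$ satisfies $H_q(\delta)\ge\tfrac18$ (resp.\ $\tfrac{1}{16}$).

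The step I expect to demand the most care is reconciling the weight restriction $\mathrm{wt}(u)<2p$ in Lemma~\ref{lem} with the target distance. This restriction caps the usable relative distance at $\tfrac12$ in the double case and at $\tfrac14$ in the four case, so I must check that the solution $\delta_0$ of $H_q(\delta_0)=\tfrac18$ (resp.\ $\tfrac{1}{16}$) lies strictly below that cap. This is the case because $H_q(\tfrac12)>\tfrac18$ and $H_q(\tfrac14)>\tfrac{1}{16}$ for every admissible $q$ (indeed $q\ge 3$), so the caps are non-binding; the residual task of absorbing the constant and subexponential factors into the exponent comparison and verifying the strict inequality for all large $n$ is then routine.
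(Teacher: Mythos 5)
Your proposal is correct and follows essentially the same route as the paper: expurgated random coding comparing the incidence bounds of Lemma~5.1 (leading orders $q^{n/4}$ and $q^{5n/4}$) against the total counts of Theorem~4.2 (leading orders $q^{n/2}$ and $q^{3n/2}$), with the Hamming-ball entropy estimate and Moree's result on Artin's conjecture in arithmetic progressions supplying the asymptotics and the infinitude of admissible $p$. Your treatment is in fact slightly more careful than the paper's, both in normalizing the ball volume by the true code length (the paper's displayed inequality uses $\binom{n}{i}$ where $\binom{2n}{i}$ is what makes the constant $\frac18$ come out) and in verifying explicitly that the weight cap $\mathrm{wt}(u)<2p$ from Lemma~5.1 is non-binding, a point the paper passes over in silence.
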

\begin{proof} Observe first that the infinitude of the primes $p\equiv 3~({\rm mod}~4),$ admitting a fixed number $q$ as a primitive root is guaranteed by \cite[Theorem 3]{M}, which derives (under GRH) a density for this family of primes. We now derive statements (1) and (2) by the classical
technique of ``expurgated random coding", which shows the existence of codes with distance
at least a given bound in a family of codes by showing that the codes with distance below
that bound are less in number than the size of the family.

 \textbf{(1)} Denote by $d_n$ the largest integer satisfying
 $$ q^2(q^{\frac{n-2}{4}}+1)\sum_{i=0}^{d_n}{n \choose i}(q-1)^i < q^{n/2}.  $$
 By Lemma 5.1 (i) there are codes of length $2n$ in the family we consider of distance $\ge d_n.$ Letting $d_n \sim n\delta_0,$ by the entropic estimates of \cite[Lemma 2.10.3]{W},
 we see that the defining inequality for $d_n$ will be satisfied for large $n$ if $H_q(\delta_0)=\frac{1}{8}.$ We see that the family of codes of length $n$ thus constructed with minimum distance $\ge d_n$ has relative distance $\delta\ge \delta_0.$

\textbf{(2)} The argument is similar to \textbf{(1)} and is omitted.
\end{proof}

{\rem The rates of the codes in this result are both $1/2.$ Thus our bounds are below the Varshamov-Gilbert bound for the linear codes which would be $H_q(\delta)\geq\frac{1}{2}.$}

\section{\textbf{Conclusion}}
In this paper, we have studied the class of self-dual double (resp. four)-negacirculant codes over finite fields. Motivated by \cite{A1}, A. Alahmadi et al. have studied the special factorization $x^n+1$ when it factors into two irreducible factors reciprocal of each other for $n$ a power of $2$. Here, we have studied another special kind of decomposition, for $n$ twice an odd prime, when $x^n+1$ factors into three irreducible and self-reciprocal polynomials. Further, we have derived an exact enumeration formula for this family of self-dual double (resp. four)-negacirculant codes. The modified Varshamov-Gilbert bound on the relative minimum distance we derived relies on some deep number-theoretic conjectures (Artin primitive root in arithmetic progression or Generalized Riemann Hypothesis).
 It would be a worthwhile study to consider more general factorizations of $x^n+1,$ and also to look at QT codes with more than two generators.

\end{document}